\newcommand{\Chi}{\mathrm{X}}
\newcommand{\mycomment}[1]{}
\theoremstyle{plain}
\newtheorem{thm}{Theorem}
\theoremstyle{definition}
\newtheorem{defn}[thm]{Definition}
\theoremstyle{remark}
\numberwithin{equation}{section}
\numberwithin{thm}{section}
\let\oldtocsection=\tocsection
\let\oldtocsubsection=\tocsubsection
\let\oldtocsubsubsection=\tocsubsubsection
\renewcommand{\tocsection}[2]{\hspace{0em}\oldtocsection{#1}{#2}}
\renewcommand{\tocsubsection}[2]{\hspace{1em}\oldtocsubsection{#1}{#2}}
\renewcommand{\tocsubsubsection}[2]{\hspace{2em}\oldtocsubsubsection{#1}{#2}}
\title[Enhancing Ethereum's Security with LUMEN]{Enhancing Ethereum's Security with LUMEN, a Novel Zero-Knowledge Protocol Generating Transparent and Efficient zk-SNARKs}
\author[Yunjia Quan]{Yunjia Quan}
\begin{document}

\maketitle
\thispagestyle{empty}
\begin{abstract}

This paper proposes a novel recursive polynomial commitment scheme (PCS) and a new polynomial interactive oracle proof (PIOP) protocol, which compile into efficient and transparent zk-SNARKs (zero-knowledge succinct non-interactive arguments of knowledge). The Ethereum blockchain utilizes zero-knowledge Rollups (ZKR) to improve its scalability (the ability to handle a large number of transactions), and ZKR uses zk-SNARKs to validate transactions. The currently used zk-SNARKs rely on a trusted setup ceremony, where a group of participants uses secret information about transactions to generate the public parameters necessary to verify the zk-SNARKs. This introduces a security risk into Ethereum's system. Thus, researchers have been developing transparent zk-SNARKs (which do not require a trusted setup), but those are not as efficient as non-transparent zk-SNARKs, so ZKRs do not use them. In this research, I developed LUMEN, a set of novel algorithms that generate transparent zk-SNARKs that improve Ethereum's security without sacrificing its efficiency. Various techniques were creatively incorporated into LUMEN, including groups with hidden orders, Lagrange basis polynomials, and an amortization strategy. I wrote mathematical proofs for LUMEN that convey its completeness, soundness and zero-knowledgeness, and implemented LUMEN by writing around $8000$ lines of Rust and Python code, which conveyed the practicality of LUMEN. Moreover, my implementation revealed the efficiency of LUMEN (measured in proof size, proof computation time, and verification time), which surpasses the efficiency of existing transparent zk-SNARKs and is on par with that of non-transparent zk-SNARKs. Therefore, LUMEN is a promising solution to improve Ethereum's security while maintaining its efficiency.
\end{abstract}

\newpage
\thispagestyle{empty}
\tableofcontents

\pagebreak
\setcounter{page}{1}

\section{Introduction}\label{sec:introduction}

The Ethereum blockchain primarily uses a consensus mechanism called Nakamoto, which utilizes a proof-of-work algorithm, and Ethereum uses layer $2$ scaling solutions to process more transactions efficiently. The best layer $2$ scaling solution that Ethereum currently uses is zero-knowledge Rollup (ZKR). ZKR processes thousands of Ethereum transactions in a batch off-chain and posts some summary data to the Ethereum blockchain, as well as a zero-knowledge proof of those data to verify the validity of those transactions and data proposed by the ZKR. 

Zero-knowledge proofs (ZKPs) are mathematical proofs that allow parties to prove their knowledge of a statement without revealing the statement, and ZKR uses ZKPs to validate off-chain transactions. The ZKP that ZKR uses comes in the form of zk-SNARKs (zero-knowledge succinct arguments of knowledge). The current zk-SNARKs used by ZKR rely on a trusted setup ceremony, where a group of participants use secret information about transactions to generate the public parameters and information necessary to verify the zk-SNARKs. However, there are security concerns surrounding the trusted setup because information can be leaked during the process and the zk-SNARKs' security would be compromised. 

Thus, researchers have been looking to develop transparent zk-SNARKs: zk-SNARKs that do not use a trusted setup ceremony. Instead, they rely on publicly verifiable randomness to set up the public parameters. Thus, transparent zk-SNARKs are much more secure than non-transparent ones. However, the existing transparent zk-SNARKs are much less efficient than the non-transparent zk-SNARKs due to their larger proof sizes and long computation time, so ZKR does not want to deploy the transparent zk-SNARKs. In this research, I aim to craft transparent zk-SNARKs that are as efficient as non-transparent zk-SNARKs so that I can improve Ethereum's security without sacrificing its efficiency.  

\subsection{Related Works} \label{relatedstudies}

The first zk-SNARKs were introduced in 2013 [9] as efficient zero-knowledge proofs to provide succinct and computationally sound arguments. The small proof size and constant verification time gave zk-SNARKs an advantage over other zero-knowledge systems. However, a significant drawback is the reliance of zk-SNARKs on a trusted setup, which introduces security risks to ZKR systems. 

With efforts in removing the trusted setup, there are zk-SNARKs employing a universal setup (a one-time setup that is thrown away after usage), which often utilizes an updatable structure reference string (SRS) and the technique of algebraic holographic proofs (AHP). Examples of universal setups include MIRAGE [25], LUNAR [16] based on PHPs (Polynomial Holographic IOPs), MARLIN based on AHPs [18], BlockMaze [23], and Bitansky's work based on PCPs (probabilistic checkable proofs) [10]. 

There is also a family of universal zk-SNARKs called PLONK, which is designed to offer a more efficient and simplified construction of universal and updatable zk-SNARKs compatible with a recursive proof composition [21]. Examples within this family are Sonic [26], HyperPlonk [17], Plookup [20], and Ambrona's optimization of Plonk [6]. 

Researchers have also previously developed transparent SNARKs, but their efficiency is much lower than the non-transparent SNARKs. Thus, they are not able to be incorporated by ZKP and Ethereum. Notable among them are Bulletproofs, which utilize a discrete log-based approach [12]. However, the verification time of the bulletproofs is linear with the size of the polynomial, which is too long. Another protocol based on discrete logarithm is Hyrax [1], which shares a similar problem. 

Transparent SNARKs generated through various techniques such as Diophantine Argument of Knowledge [14], list polynomial commitments [24], lattice-based schemes [3], and univariate IOP [8] also face challenges such as large proof sizes, multiple trust assumptions, or extended verification times. On the other hand, scalable, transparent arguments of knowledge of zero knowledge, known as zk-STARKs, are not succinct and have significantly larger proof sizes than zk-SNARKs [7]. 

Efforts have also been directed towards recursive proof composition strategies for SNARKs to reduce lengthy computations. Some techniques have been reported in the literature such as Pointproofs [22], Cramer's work on amortized complexity [19], Halo [11], Botrel's work on faster montgomery multiplication [4], and Halo Infinite [2]. However, these approaches often rely on creating provers for the entire SNARK circuit, which leads to considerable verification time. 

The existing transparent zk-SNARKs and recursive approaches are all facing one significant challenge: they are too inefficient. In this research, I aim to address this challenge. I aim to design transparent zk-SNARKs that are secure and efficient. Compared to the currently used non-transparent zk-SNARKs, I aim to design transparent zk-SNARKs that enhance Ethereum's security without sacrificing its efficiency. 

\subsection{My Contributions}

I constructed a new PCS with recursive proof composition and a witness-extended emulation proof, a novel PIOP protocol, mathematical proofs of completeness, soundness, and zero-knowledge, and implemented my algorithms - which are called LUMEN. LUMEN's recursive PCS serve as a cryptographic primitive that allows a prover to commit to a polynomial so that they can later reveal the value of the polynomial at specific points without revealing the entire polynomial, and LUMEN's PIOP used with its PCS enables the creation of transparent zero-knowledge proofs. 

My unique designs within LUMEN contribute to its strong security and efficiency. In my new PCS, I creatively used groups with hidden orders, inventive encoding vectors, and masking polynomials to ensure that it is inherently transparent. In my recursive proof composition method, I designed a new method with an amoritzation strategy and a hash function, which is able to significantly reduce the verification time by not having to verify each individual ZKP. Moreover, a new PIOP protocol was crafted that incorporates Lagrange basis polynomials, sets of matrices, and auxiliary polynomials. Those new and creative aspects of LUMEN altogether allow LUMEN to surpass existing transparent zk-SNARKs in efficiency. 

Throughout the creation of LUMEN, I ensured that no trusted setup would be used and that I can mathematically prove the completeness, soundness, and unforgeability of the transparent zk-SNARKs generated by LUMEN. Furthermore, I ensured that the polynomials that I craft in the PCS and PIOP are going to generate efficient and secure arguments of knowledge. Moreover, my code implementation of LUMEN reveals its practicality and is novel in itself for transforming algorithms and equations into code. 

With its unique design of the PCS and PIOP, LUMEN is a promising solution for Ethereum to implement to improve its security without sacrificing its efficiency. This has significant implications for Ethereum because an increase in Ethereum security has the potential to impact the security standards of the entire cryptocurrency market. In addition, others can modify and adapt my algorithms to develop algorithms for other cryptocurrencies. 

\subsection{Paper organization}

The remainder of this paper is organized as follows. Section \ref{2} introduces the preliminaries of cryptography and algebra used in LUMEN's PCS or PIOP. Section \ref{3} reveals the framework of LUMEN's algorithms and their relationship to generate transparent zk-SNARKs. Section \ref{4} covers the construction of LUMEN's PCS, outlining its central argument and recursive proof composition. Moreover, I prove that LUMEN's PCS has witnessed extended emulation. Section \ref{5} contains the details of the PIOP protocol. Section \ref{6} presents the security analysis of LUMEN and mathematically proves the completeness, soundness, and zero-knowledge of the arguments of knowledge compiled by LUMEN. Section \ref{7} presents the implementation and results of LUMEN and compares it with zk-SNARKs. Finally, Section \ref{8} concludes this research by analyzing the results and implications of LUMEN. 

\section{Preliminaries}\label{2}

Polynomial Commitment Schemes are algorithms that generate a commitment to a polynomial, which can later be used by a verifier to confirm the evaluations of the commitment polynomial. PCS is a commonly used part of zk-SNARKs that locks in the prover's claims about their knowledge without revealing it. I first define a general commitment scheme: 

\begin{defn}
(Commitment Scheme) [15]. A commitment scheme $\Gamma$ is a tuple of Probabilistic Polynomial Time (PPT) algorithms $\Gamma = (\textbf{Setup, Commit, Open})$ where
\begin{itemize}
    \item $\textbf{Setup}(1^{\gamma}) \rightarrow \mathrm{pp}$ generates the public parameters $\mathrm{pp}$
    \item $\textbf{Commit}(pp;x) \rightarrow (C;r)$ takes a secret message $x$ and outputs a public commitment $C$
    \item $\textbf{Open}(pp,C,x,r) \rightarrow b \in \{0,1\}$ takes message $x$ and opens hint $r$ to verify the opening of commitment $C$
\end{itemize}
\end{defn}

Then, I define a polynomial commitment scheme as follows: 

\begin{defn}
(Polynomial Commitment Scheme) [13] A polynomial commitment scheme (PCS) is a tuple of the Polynomial Probability Time (PPT) algorithms \\ $(\textbf{Setup, Commit, Open, Eval})$ such that $(\textbf{Setup, Commit, Open})$ is a binding commitment scheme and the evaluation algorithm is an interactive public-coin protocol between the PPT prover $\mathcal{P}$ and the verifier $\mathcal{V}$. 
\end{defn}

Alongside my novel PCS, I used a polynomial interactive oracle proof (PIOP) protocol. They are useful for making the zk-SNARK in LUMEN's PCS transparent and efficient. I define an interactive oracle proof as follows: 

\begin{defn}(Interactive Oracle Proof) [24]
An interactive proof of knowledge for an NP relation $\mathcal{R}$ is an interactive protocol between a prover $\mathcal{P}$ and a verifier $\mathcal{V}$, where $\mathcal{P}$ has a private input $x$ and both parties have a common public input $y$ such that $(y, x) \in \mathcal{R}$. 
\end{defn}

A polynomial interactive oracle proof follows the definition of an IOP, as PIOP is a specific case of an IOP. 

\vspace{2mm}

Now, I introduce the notions of vanishing and Lagrange polynomials, which will be later used in LUMEN's algorithms. Consider a multiplicative subgroup $\mathbb{H} \in \mathbb{F}$ for a field $\mathbb{F}$. A vanishing polynomial approaches zero at any point. I define the vanishing polynomial as $$z_s(x) = x^{\lvert \mathbb{H} \rvert} - 1$$

I denote an $n^{th}$ Lagrange Polynomial by $\Delta_n(x)$ over a set of points $x_i$ where $1 \leq i \leq k \in \mathbb{Z}$ such that 
\begin{align*}
\Delta_{n}(x) = \prod_{0 \leq m < k \& m \neq n} \frac{x - x_m}{x_n - x_m} =\frac{n}{\lvert \mathbb{H} \rvert} \cdot \frac{x^{\lvert \mathbb{H} \rvert} - 1}{x - n}
\end{align*}

The second definition is typically used to simplify equations. Furthermore, I define a bivariate Lagrange polynomial as follows:

\begin{defn}(Bivariate Lagrange Polynomial). Let $\mathbb{F}$ be a finite field, and $\mathbb{H}$ be a multiplicative subgroup of $\mathbb{F}$. The bivariate Lagrange polynomial can be expressed as follows: 
$$\Lambda_{\mathbb{H}}(X, Y):=\frac{Z_{\mathbb{H}}(X) \cdot Y-X \cdot \mathcal{Z}_{\mathbb{H}}(Y)}{n \cdot(X-Y)}$$
\end{defn}

The concise form of the Lagrange polynomial helps reduce the verification time in the LUMEN PIOP. I prove the following theorem which is useful for the construction of LUMEN's PIOP. 

\begin{thm}
Let $\mathbb{F}$ be a finite field and $\mathbb{H}$ be a multiplicative subgroup of $\mathbb{F}$. Then, I obtain $\Lambda_{\mathbb{H}}(X, Y)= \sum_{h \in \mathbb{H}} \Delta_{h}(X) \cdot \Delta_{h}(Y)$ for the Lagrange polynomial $\Delta$. 
\end{thm}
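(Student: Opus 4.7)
My plan is to verify the identity by the standard tactic of showing both sides are polynomials in $Y$ of degree at most $n-1$ (with coefficients in $\mathbb{F}[X]$) and that they agree on the $n$ points $Y \in \mathbb{H}$, which forces equality as elements of $\mathbb{F}[X,Y]$. Here $n = |\mathbb{H}|$.

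First I would check that the left-hand side is actually a bivariate polynomial and read off its $Y$-degree. The numerator $Z_{\mathbb{H}}(X)\cdot Y - X\cdot Z_{\mathbb{H}}(Y) = (X^n-1)Y - X(Y^n-1)$ vanishes whenever $X = Y$, so $(X-Y)$ divides it; thus $\Lambda_{\mathbb{H}}(X,Y) \in \mathbb{F}[X,Y]$. The numerator has degree $n$ in $Y$, so after dividing by the linear factor $(X-Y)$ we obtain a polynomial of degree at most $n-1$ in $Y$. The right-hand side is clearly of degree at most $n-1$ in $Y$ as well, since each $\Delta_h(Y)$ has degree $n-1$.

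Next I would evaluate both sides at $Y = h^{\ast}$ for an arbitrary $h^{\ast} \in \mathbb{H}$. On the left, since $Z_{\mathbb{H}}(h^{\ast}) = 0$, the formula collapses to
\[
\Lambda_{\mathbb{H}}(X, h^{\ast}) \;=\; \frac{Z_{\mathbb{H}}(X)\cdot h^{\ast}}{n\cdot(X - h^{\ast})} \;=\; \frac{h^{\ast}}{n}\cdot \frac{X^{n}-1}{X - h^{\ast}},
\]
which is precisely the closed form of $\Delta_{h^{\ast}}(X)$ given in the preliminaries. On the right, the interpolation property $\Delta_{h}(h^{\ast}) = \delta_{h,h^{\ast}}$ makes the sum collapse to the single term $\Delta_{h^{\ast}}(X)$. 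So both sides agree at each of the $n$ points of $\mathbb{H}$.

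Finally, for any fixed $X_0 \in \mathbb{F}$, the two sides give polynomials in $Y$ of degree at most $n-1$ that coincide on the $n$-element set $\mathbb{H}$; hence they coincide as polynomials in $Y$. Since $X_0$ was arbitrary, the identity holds in $\mathbb{F}[X,Y]$. The only step that requires mild care is confirming that the left-hand side really is a polynomial (so the evaluation $Y = h^{\ast}$ is well-defined and not a $0/0$ artifact) and that the $Y$-degree after cancellation drops to $n-1$; the symbolic divisibility check above handles this cleanly, so no true obstacle arises.
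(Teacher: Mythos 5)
Your proof is correct, but it takes a genuinely different route from the paper. The paper argues directly from the right-hand side: it expands each $\Delta_h$ via the closed form $\Delta_h(T)=\frac{h}{n}\cdot\frac{Z_{\mathbb{H}}(T)}{T-h}$, applies the partial-fraction identity $\frac{X-Y}{(X-h)(Y-h)}=\frac{1}{Y-h}-\frac{1}{X-h}$, and then uses the interpolation identity $\sum_{h\in\mathbb{H}} h\cdot\Delta_h(Y)=Y$ to collapse the sum into the closed form of $\Lambda_{\mathbb{H}}$. You instead avoid all rational-function manipulation: you check that $(X-Y)$ divides the numerator of $\Lambda_{\mathbb{H}}$, bound the $Y$-degree of both sides by $n-1$, and use the Kronecker-delta property $\Delta_h(h^{\ast})=\delta_{h,h^{\ast}}$ to match the two sides at the $n$ points $Y=h^{\ast}\in\mathbb{H}$, concluding by the fact that a degree-$(n-1)$ polynomial is determined by $n$ values. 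Your approach is more robust (no cancellation of poles to track, only divisibility and interpolation), while the paper's computation is shorter once the closed forms are in hand and exposes the auxiliary identity $\sum_h h\Delta_h(Y)=Y$, which is itself useful elsewhere. One small caution about your last paragraph: over a finite field, ``agreement for every fixed $X_0\in\mathbb{F}$'' does not by itself force a polynomial identity (e.g.\ $X^{q}-X$ vanishes on all of $\mathbb{F}_q$), so you should either invoke the $X$-degree bound $n-1<|\mathbb{F}|$, or—better—stick with the framing you announce at the outset: treat both sides as polynomials in $Y$ with coefficients in $\mathbb{F}[X]$ (or over the fraction field $\mathbb{F}(X)$), note that your evaluation step gives equality at $Y=h^{\ast}$ as identities in $\mathbb{F}[X]$, and conclude that the difference, having degree at most $n-1$ in $Y$ and $n$ distinct roots, is zero. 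Also implicit in both arguments is that $n=|\mathbb{H}|$ divides $|\mathbb{F}|-1$ and hence is invertible in $\mathbb{F}$, so dividing by $n$ is harmless.
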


\begin{proof}
I use the following algebraic manipulations to show this theorem: 
$$
\begin{aligned}
\sum_{h \in \mathbb{H}} & \Delta_{h}(X) \cdot \Delta_{h}(Y) =\sum_{h \in \mathbb{H}} \frac{h^{2}}{n^{2}} \frac{\mathcal{Z}_{\mathbb{H}}(X) \cdot \mathcal{Z}_{\mathbb{H}}(Y)}{(X-h)(Y-h)}=\frac{\mathcal{Z}_{\mathbb{H}}(X) \cdot \mathcal{Z}_{\mathbb{H}}(Y)}{X-Y} \sum_{h \in \mathbb{H}} \frac{h^{2}}{n^{2}} \frac{X-Y}{(X-h)(Y-h)} \\
& =\frac{\mathcal{Z}_{\mathbb{H}}(X) \cdot \mathcal{Z}_{\mathbb{H}}(Y)}{n \cdot(X-Y)} \sum_{h \in \mathbb{H}} \frac{h^{2}}{n}(\frac{X-h}{(X-h)(Y-h)}+\frac{X-Y-(X-h)}{(X-h)(Y-h)} \\
& =\frac{\mathcal{Z}_{\mathbb{H}}(X) \cdot \mathcal{Z}_{\mathbb{H}}(Y)}{n \cdot(X-Y)} \sum_{h \in \mathbb{H}} \frac{h^{2}}{n}\left(\frac{1}{Y-h}-\frac{1}{X-h}\right) \\
&=\frac{1}{n \cdot(X-Y)}(\boldsymbol{z}_{\mathbb{H}}(X) \sum_{h \in \mathbb{H}} h \cdot \Delta_{h}(Y)-\mathcal{Z}_{\mathbb{H}}(Y) \sum_{h \in \mathbb{H}} h \cdot \Delta_{h}^{\mathbb{H}}(X)) \\
& =\frac{\left(\mathcal{Z}_{\mathbb{H}}(X) \cdot Y-\mathcal{Z}_{\mathbb{H}}(Y) \cdot X\right)}{n \cdot(X-Y)} \\
& = \Lambda_{\mathbb{H}}(X, Y)
\end{aligned}
$$
\end{proof}

\section{Framework} \label{3}

LUMEN consists of a PCS and a polynomial interactive oracle proof (PIOP) protocol. When I apply the Fiat-Shamir heuristic to LUMEN, I acquire transparent zk-SNARKs: transparent zero-knolwedge succinct non-interactive arguments of knowledge. The figure outlining the framework of LUMEN can be found on the next page. 

\begin{figure}
    \centering
    \includegraphics[width=0.96\textwidth]{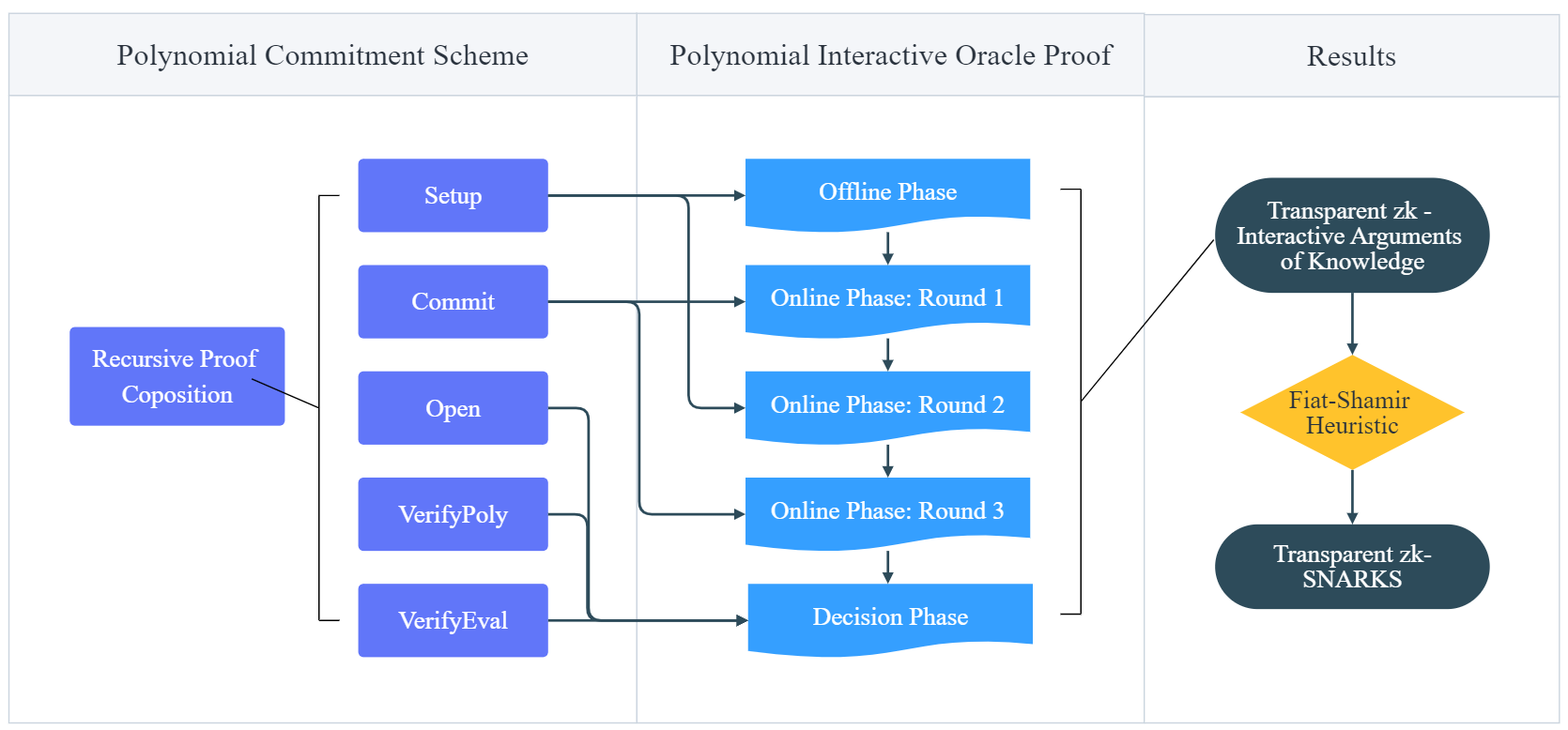}
    \caption{Framekwork of LUMEN \\ \textbf{Source}: Myself}
    \label{fig:framework}
\end{figure}

\section{PCS Construction}\label{4}

\subsection{Polynomial Commitment Scheme}

I introduce the construction of LUMEN's polynomial commitment scheme, which is used with a recursive proof composition method to altogether generate aggregated polynomial commitments. LUMEN's PCS consists of five main algorithms: (\textbf{Setup}, \textbf{Commit}, \textbf{Open}, \textbf{VerifyPoly}, and \textbf{VerifyEval}). 

\noindent \textbf{Setup}: 
\begin{itemize}
    \item Sample a group of unknown order $G$ and a vector $v \in \mathbb{Z}^{\alpha}$; randomly choose $g \in G$ and let $u = \{g,g^2,\dots,g^{\alpha}\}$
    \item Choose a function $f(x) \in \mathbb{Z}/(x^d - 1)$
    \item Generate a masking polynomial $p_1(x)$ (monic) and a witness polynomial $p_2(x)$ with degrees $< d$
    \item Publish the public parameters $(G,v,u,d,p_2(x))$
\end{itemize}
\textbf{Commit}: 
\begin{itemize}
    \item Given the public parameters, compute $$q(x) = \sum_{i \in \lvert v \rvert} v_i p_1(x) + p_2(x)^g$$
    \item Compute the following: 
    $$c(x) = \frac{\sum_{g \in u}g \cdot f(x) + p_1(x)}{\epsilon \cdot q(x)}$$
    \item Send the commitment $c(x)$ along with $q(x)$
\end{itemize}
\textbf{Open}: 
\begin{itemize}
    \item Verify that  
    $$\epsilon \cdot q(x) \cdot c(x) \equiv p_2(x) \mod x^d - 1 $$
\end{itemize}

\noindent \textbf{VerifyPoly}: 

In this algorithm, I wish to validate that $Open(c(x)$,$q(x)$,$p_2(x)) = 1$ for a true commitment $c(x)$. Consider the vectors $u,v$ that encode the properties of $c(x)$ and the vector $w$ defined as follows: 
$$u_i \cdot v_i = w_i$$

Let $r(x),s(x),t(x)$ be three vectors of polynomials that encode the properties of $c(x)$ along with $u,v,w$. Then, I need the following constraint to be satisfied: 
\begin{align*}
\sum_{i = 1}^{\alpha} u_i \cdot r_i(x) &+ \sum_{i = 1}^{\alpha}v_i \cdot s_i(x) + \sum_{i = 1}^{\alpha} w_i(x^{\alpha}t_i(x)-(x^i+x^{-i})) +\sum_{i=1}^{\alpha} \frac{u_{i} \cdot t_{i}(x)}{v_{r} \cdot x^{\alpha}}=0
\end{align*}

where I define the functions $r,s,t$ as follows:
\begin{align*}
r_{i}(x)=\prod_{i = 1}^{d} f\left(u_{i}\right) \cdot x^{i} \text{ }; \text{ } s_{i}(x)=\prod_{i = 1}^{d} p_1(v_i) \cdot x^i \text{ }; \text{ } t_{i}(x)=& \prod_{i = 1}^{d} p_2(v_i) \cdot x^i
\end{align*}
However, the prover cannot commit directly to $r,s,t$, I define the following functions: 
\begin{align*}
a(x, y)&=\sum_{i=1}^{\alpha} u_{i}\left(x\right) \cdot x^{i} y^{-i}+y^{\alpha}\sum v_i(x^i) \cdot x^{-i} - x^{\alpha} \sum w_{i}(Y) \cdot Y\\
b(x, y) &=x^{\alpha} \sum u_{i}\left(y^{i}\right)-\sum v_{i}(x)\left(y^{-\alpha}-x^{i}\right)-\sum w_{i}\left(y^{i}\right)\left(x^{-i}-x^{i}\right) \\
d(x, y) &=\sum\left(x^{-i}-y^{i}\right) u_{i} \cdot r_{i}-x^{-\alpha} \sum v_{i} \cdot w_{i}+ \prod_{i = 1}^{\alpha} w_{i} \cdot x^{-i-\alpha} \\
e(x, y) &= a(x, 1)+b(1, y)-d(x, y)
\end{align*}

With these equations, the prover can commit to $a(x,y)$ due to $a(x,y) = a(xy,1)$ and demonstrate that the commitment $Commit(u,v,p_1,p_2,f)$ is valid. The proof discloses the commitment to $m = a(1,q)$, $n = b(p,1)$, and $r = e(p,q)$, then the verifier can easily verify the commitment as long as the following equation is satisfied: 
$$r = p(d(m,n)+n) - q$$

\noindent \textbf{VerifyEval}: The goal of the evaluation is to convince the verifier that $c(x)$ is a commmitment and that $f(x_1) = y_1$ for a $x_1 \in u$ and $f(x_2) = y_2$ for a $x_2 \in v$. 

\vspace{7mm}

\vspace{-5mm}

\begin{enumerate}[after=\vspace{-\baselineskip},before=\vspace{-0.5\baselineskip}]
    \item The verifier $\mathcal{V}$ picks a bit $b$ randomly and computes $\hat{b} \equiv b \mod d$
    \item The prover $\mathcal{P}$ computes $h_1(x)$ and $h_2(x)$ such that $f = h_1(x) \cdot b - h_2(x)$
    \item $\mathcal{P}$ computes $v^{\prime} = p_1(v) + \alpha \cdot v$ and $u^{\prime} = p_2(u) + \alpha \cdot u$
    \item $\mathcal{P}$ computes the following: 
    $$c^{\prime} = \frac{\sum_{i \in \vert u \vert} u_i f(x) + p_1(v)}{\alpha \cdot q(v)}$$
    \item $\mathcal{P}$ sends $\hat{b},v^{\prime},u^{\prime},c^{\prime}$ to $\mathcal{V}$
    \item $\mathcal{V}$ checks that $$f(v) = \hat{b} \cdot v^{\prime} - u^{\prime} \cdot \lvert v \rvert$$
    $\mathcal{V}$ returns $0$ if the condition is not met, $1$ if it is. 
\end{enumerate}

\vspace{1mm} \subsection{Recursive Protocol}
Using LUMEN's recursive proof composition, I wish to generate aggregated knowledge arguments because they are more efficient for ZKRs. In the recursion protocol, I use an amortization strategy such that I skip the proof verification at each recursion level to reduce the verification time. I also use a hash function during the regression to reduce the size of the final arguments and increase its security. The full protocol of the recursive proof composition is as follows:

$$
R=\left\{\begin{array}{l}
\left(H(G_{0}), f_{0}(x),\left(p_{1}(x)\right)_{0},\left(p_{2}(x)\right)_{0}, k \in[1, \alpha]\right. \wedge \left(u,v, w\right) \text { and }(r,s,t) \text { where} \\
\sum_{i=1}^{\alpha} u_{i} \cdot r_{i}(x)+\sum_{i=1}^{\alpha} v_{i} \cdot s_{i}(x) +\sum_{i=1}^{\alpha} w_{i} \cdot t_{i}(x)=g^{k} \\
\wedge G_{t}= \sum_{i = 0}^{n} H(\operatorname{Agg}\left(G_0, c_0, g\left(x,\left(p_{1}(x)\right)_{0}, p_{2}(x)\right)_{0}\right)) \\
\wedge C_{t}(x) = c_n(H(c_{n-1}(x))) + c_{n-1}(H(c_{n-2}(x))) + \cdots
\end{array}\right. 
$$

where $H(x)$ is the $KECCAK-256$ hash function, the standard hash function used by Ethereum. 

I recursively composed the hidden-order groups and related parameters and did not verify the individual arguments of knowledge. Instead, I continue to aggregate the arguments and verify the values of $G_t$ and $C_t(x)$ at the end to reduce the verification time while maintaining security. 

\subsection{Witness-Extended Emulation}

\begin{thm}
THe proposed PCS has a witness-extended emulation, which allows for easy ZKR implementation.  
\end{thm}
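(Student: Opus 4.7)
The plan is to construct an expected polynomial-time emulator $\mathcal{E}$ that, given oracle access to any (possibly malicious) prover $\mathcal{P}^{*}$ producing accepting transcripts with non-negligible probability, outputs a transcript distributed identically to a real prover-verifier interaction together with a witness polynomial $f(x)$ consistent with the claimed opening of $c(x)$. The skeleton follows the standard forking/tree-extraction template for public-coin interactive arguments: $\mathcal{E}$ first runs $\mathcal{P}^{*}$ once against honestly sampled verifier challenges to record a transcript, and whenever that transcript is accepting, it rewinds $\mathcal{P}^{*}$ at each challenge point in \textbf{Open} and \textbf{VerifyEval}, resampling fresh challenges until it has collected enough accepting transcripts sharing a common prefix to algebraically recover the witness.

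First, at the evaluation layer, I would rewind the verifier's challenge bit $b$ in step (1) of \textbf{VerifyEval}, so that $\hat{b} \equiv b \bmod d$ varies while the prior transcript is fixed. Two accepting transcripts with distinct challenges $\hat{b}_{1}, \hat{b}_{2}$ yield two linear equations in the prover's responses $v', u'$, from which $f(v)$ can be solved via the check $f(v) = \hat{b}\cdot v' - u'\cdot |v|$; collecting $d$ such evaluation pairs at distinct points of $v$ then lets $\mathcal{E}$ interpolate the entire $f(x) \in \mathbb{Z}/(x^{d}-1)$. Second, at the commitment layer, I would use the relation $\epsilon \cdot q(x) \cdot c(x) \equiv p_{2}(x) \pmod{x^{d}-1}$ together with the auxiliary identities among $a(x,y),\, b(x,y),\, d(x,y),\, e(x,y)$ in \textbf{VerifyPoly}: rewinding the challenge points $p, q$ that appear in $m = a(1,q)$, $n = b(p,1)$, and $r = e(p,q)$ produces an overdetermined system whose unique solution forces the encoding vectors $u, v, w$ and the triples $(r_{i}, s_{i}, t_{i})$ to agree with a genuine opening, except with probability $O(\deg/|\mathbb{F}|)$ by a Schwartz--Zippel argument.

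For the recursive composition, I would iterate the above extractor inductively over recursion depth. At depth $t$, the hash-based aggregation $G_{t} = \sum_{i} H(\operatorname{Agg}(\cdot))$ and the telescoping commitment $C_{t}(x) = c_{n}(H(c_{n-1}(x))) + c_{n-1}(H(c_{n-2}(x))) + \cdots$ bind the current prover state to the previous layer; modelling $H$ as a random oracle (which is already the setting used when Fiat--Shamir is applied to LUMEN), rewinding on oracle answers lets $\mathcal{E}$ unroll one level per rewind, producing an extraction tree of depth equal to the recursion depth and polynomially bounded width. A standard expected-time accounting along the tree, using the amortization strategy already built into the recursive protocol, shows that $\mathcal{E}$ runs in expected polynomial time and that the root witness recovered is precisely the original $f_{0}(x)$ together with the auxiliary polynomials $(p_{1})_{0}, (p_{2})_{0}$.

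The main obstacle will be controlling extraction inside the hidden-order group $G$: unlike in a prime-order setting, linear combinations of accepting transcripts cannot be inverted freely, since exponents live in $\mathbb{Z}$ rather than $\mathbb{Z}/p\mathbb{Z}$, and an adversary may in principle produce two accepting transcripts whose difference yields only a nontrivial root relation in $G$ rather than a valid witness. The plan for this step is to reduce any such extraction failure to breaking either the Strong RSA or the Adaptive Root assumption on $G$, thereby turning an apparent obstruction into a standard cryptographic hardness assumption, and then to propagate the resulting negligible failure probability up the recursion tree without a super-polynomial blow-up in $\mathcal{E}$'s running time. Cleanly packaging this reduction, and checking that the random-oracle rewinds compose correctly with the group-theoretic rewinds, is the technically delicate part of the argument.
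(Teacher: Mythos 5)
Your plan is a sound instance of the standard witness-extended-emulation template, but it follows a genuinely different route from the paper. You build a general rewinding/forking emulator: fork on the \textbf{VerifyEval} challenge bit and interpolate $f$ from many evaluation pairs, fork on the \textbf{VerifyPoly} challenges $p,q$ to pin down $u,v,w$ and $(r_i,s_i,t_i)$ via Schwartz--Zippel, unroll the recursion one hash query at a time in the random-oracle model, and discharge the hidden-order-group obstruction by reducing extraction failures to Strong RSA or Adaptive Root, with an expected-time accounting over the extraction tree. The paper instead performs a single concrete algebraic extraction in the style of Sonic: it fixes an explicit set of challenges $x,y,y_1,z_1,\dots,z_7,v_1,\dots,v_{10}$, writes down the rational identity $t(x,y)$ that any accepting transcript must satisfy, reruns the prover to solve for the committed polynomials $r_1,\dots,r_4$ and $c'$ from the evaluations $v_i$, checks via the extractor-as-verifier that these satisfy the commitment relation $c(x)=\mathrm{Commit}(v,z,r_1,r_2,r_3)$, and concludes by verifying the structural identity $t(x,y)=t'(xy,1)$, which certifies that the extracted polynomials satisfy the protocol constraint. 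What your route buys is coverage the paper omits: the paper's argument never addresses the recursive aggregation ($G_t$, $C_t$, the hash $H$), never confronts the fact that exponents in a hidden-order group cannot be freely inverted (your Strong RSA/Adaptive Root reduction is exactly the missing ingredient, as in DARK-style analyses), and never accounts for the emulator's running time or success probability; the cost is that you must introduce hardness and random-oracle assumptions the paper never formalizes. What the paper's route buys is explicitness at the algebraic level --- concrete recovery formulas for $r_1,\dots,r_4$ rather than an abstract interpolation claim. One caution on your evaluation-layer step: after rewinding on $\hat b$ the prover may answer with fresh $v',u',c'$, so solving the two linear checks for $f(v)$ requires first invoking the binding of the commitment to force consistent responses across forks; as written, that dependence is implicit and should be made part of the argument.
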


\begin{proof}
Consider simulator $S$ which acts as an honest prover: Let $S$ generate $h \in G$ and $H = \{h,h^2,\dots,h^{\alpha} \}$. I also introduce an extractor $X$ and adversary $A$ that can generate accepted transcripts. This leads to the following challenges for the prover: $x,y,y_1,z_1,z_2,\dots,z_7,\\ v_1,v_2,\dots,v_{10}$. For the prover, I have
\begin{align*}
t(x,y) &=\frac{\left(v_{1}+z_{1} v_{10}+z_{1}^{2} v_{9}+z_{1}^{3} v_{8}+z_{1}^{4} v_{4}+z_{1}^{5} v_{3}\right) - v{11}}{z_{6}-x} \\
&+z_{4} \frac{\left(v_{8}+z_{2} v_{4}\right) - v_3}{z_{4}-y}+z_{2}^{3} \frac{v_{12}-v_{9}}{z_{3}-x y}+z_{2}^{3} \frac{v_{13}-v_{2}}{z_{3}-y_{o}}+z_{2}^{4} \frac{v_{13}-v_{4}}{z_{3}-y_{1}}
\end{align*}

From $t(x,y)$, I substitute in $r_1(x),r_2(x),r_3(x), \\ r_4(x),c^{\prime}(x)$, 
\begin{align*}
t(x,y) &=\frac{r_1(x)-\left(v_{10}+z_2v_7 + z_2^{2}v_4+z_2^{3}v_1\right)}{x-y^{-1}} \\
&+z_{3} \frac{r_2(x)-\left(v_{8}+z_{1} v_{3}\right)}{x-y}+z_{3}^{2} \frac{r_3(X)-v_{3}}{x-y^2}+z_{3}^{3} \frac{v_{10}-c^{\prime}(x)}{x-y_{o}}+z_{3}^{4} \frac{r_4(x)-v_{12}}{X-y_{1}}
\end{align*}

I run the prover again with the challenge of extracting $t(x,y)$, and obtain the following equations:  
\begin{align*}
&r_1(x)=v_{1}+z_{1} v_{10}+z_{1}^{2} v_{9}+z_{1}^{3} v_{8}+z_{1}^{4} v_{4}+z_{1}^{5} v_{3} \\
&r_2(y)=v_{8}+z_{2} v_{4} + z_{2}^2 v_{5} \text{ ; } r_3(x y)=v_3 - v_{10} \cdot v_7
\end{align*}

Moreover, I get that
\begin{align*}
r_4(y) = \frac{v_5 \cdot v_8}{v_1} + v_9 \text{ ; } r_4(y_0) = v_2 - v_4 \text{ ; } r_4(y_1) = (v_1 - v_7)v_2
\end{align*}

Now that I have defined all of the polynomials necessary, I seek to find the relationships between the polynomials and commitments. I used the extractor as a verifer to verify the commitments and equations that the polynomials would have to satisfy. After the computations, I obtain:
\begin{align*}
&c^{\prime}\left(y_{o}\right)=v_{7}, r_3(x)= v_{2}, r_2(x)=v_{1}, 
\\
&c^{\prime}\left(y_{1}\right)=v_{9},
r_4(x)= v_{3}, 
r_4(y)=v_{5}, \\
&c^{\prime}(y)=v_{6}, 
r_1(y)=v_{4},
r_2(x y)=v_{8}
\end{align*}

which establishes that 
$$c(x)= \mathrm{Commit} \left(v,z, r_1(x), r_2(x),r_3(x)\right)$$

Finally, with the challenge of outputting the correct commitment, I obtain the polynomial $t^{\prime}(x,y)$: 
\begin{align*}
t^{\prime}(x,y) &= (r_3(x)-r_1(x))x^{2\alpha} \cdot r_1(xy)+ r_2(y) \cdot \sum_{i=1}^{\alpha} (x^i - x^{-i})y^{\alpha} - r_4(xy)
\end{align*}

From this, I find that $t(x,y) = t^{\prime}(xy,1)$, I can conclude that the constraint in the protocol is satisfied, and my PCS indeed demonstrates witness-extended emulation.
\end{proof}

\section{PIOP Protocol}\label{5}

LUMEN's PIOP consist of five phases: an offline phase, three rounds of an online phase, and a decision phase. These phases go from setting up the public parameters to generating the transparent interactive arguments of knowledge and ultimately proving the validity of those arguments of knowledge. 

\subsection{Offline Phase}

The offline phase is preparatory and typically involves setting up the public parameters used in the PIOP protocol. I first define a universal relation that will be central to the relation encoder to craft relationships and generate the zero-knowledge proof on. 

Let $\mathbb{F}$ be a finite field, and let $k$ be a positive integer. I seek to define a universal relationship on the tuple as follows:
$$(\mathbb{F},k,M_1,M_2,h^{\prime}(x), h^{\prime}(x), h^{\prime \prime}(x), m, n)$$

where $M_{1}, M_{2} \in F^{n \times n}$, $max\left\{\left\|M_{1} \mid,\right\| M_{2} \|\right\} \leqslant k$, $h(x), h^{\prime}(x), h^{\prime \prime}(x) \in F_{\alpha-1}[x]$, and $\mathbb{K}=\left\{h^{\prime \prime}(1), \ldots, h^{\prime \prime}(m)\right\}$ such that
\begin{align*}
&f(x)=\sum_{k \in \mathbb{K}} h(k), Q(X)+h^{\prime}(x) \cdot n \\
&h^{\prime \prime}(x)=1-Q(x) \cdot h(x)+k h^{\prime}(x)\\
&Q(x)=\prod_{k \in K}\left(P_{1}(k+1)-P_{2}\left(k^{2}-x\right)\right)
\end{align*}

Then, the relation is as follows: 
$$M_{1} \cdot h^{\prime}(x)-h \cdot h^{\prime \prime}(x)+\sum_{k \in K}\left(Q(x)-h(k) \cdot M_{2}\right) \cdot h^{\prime}(k)=0$$ 

The relation encoder $\mathcal{R}$ takes in $\left(F, A, B, M_{1}, M_{2}, M_{1}\right)$ and generates eight polynomials: $P_{1}, P_{2}, P_{3}, P_{4}, Q_{1}, Q_{2}, Q_{3}, Q_{4} \in F_{m}[x]$. 

I define the $Q_i$'s as follows: 
\begin{align*}
Q_{1}(x)&=Q(x-\alpha)+\sum_{k \in K} M_{1}^{\text {row }(k)\text{col }(k)} T P_{1}(x, 1) +n \cdot P_{3}(x, \alpha)\\
Q_{2}(X)&=Q(x+\alpha)+\sum_{k \in K} M_{2}^{\text {row }(k) \text { col }(k)}\left(P_{2}(x, 1)-Q_{1}(x)\right)\\
Q_{3}(x)&=x \cdot P_{1}(x, \alpha)-A(x) Q_{1}\left(x-P_{1}(v)\right)\\
Q_{4}(x)&=X \cdot P_{2}(x, \alpha)-B(x) \cdot Q_{2}\left(x-p_{2}(k)\right)
\end{align*}

Based on the relation and using a Lagrange polynomial, I get that the following equations would hold
\begin{align*}
0 &= M_{1} \cdot h^{\prime \prime}(x) \cdot h(x) +\sum_{k \in K} p_{1}(k+1)+M_{2} p_{2}(k-1)^{\alpha} +h^{\prime}(k) \cdot h^{\prime \prime}(k) \\
&= \sum_{k \in K}\left(h^{\prime}(k) +h^{\prime \prime}(k) \cdot M_{1}\right) \cdot \Lambda(x, \alpha) \cdot h^{\prime}(k) +P_{1}(x, \alpha) \cdot\left(M_{2} \cdot h(k)+P_{1}\right)
\end{align*}

Then, I get a relationship between the $P_i$ polynomials: 
$$H^{\prime}(x) \cdot P_{1}(x, \alpha)=P_{2}(x, \alpha) \cdot M_{2}+P_{3}(x,1)$$

After algebraic manipulations, I get 
\begin{align*}
P_{3}(x, 1)&=M_{2} \cdot P_{2}(x, \alpha)+ P_{1}(\alpha, y) \\
&=\sum_{k \in K}\left(h_{1}(K) \cdot v^{\prime}-h_{2}(K) \cdot u^{\prime}\right) \cdot Q(x)+P_{4}(y)
\end{align*}

Finally, I get that the prover's task to verify the polynomials $f(x),h(x)$ becomes
\begin{align*}
\sum_{k \in R}\left(h^{\prime}(k+m) -m \cdot h(k)\right) \cdot M_{2} + \Lambda(x, \alpha) \cdot h^{\prime \prime}(k-m) \cdot\left(P_{2}(x \alpha)-n\right) = 0
\end{align*}

\vspace{1mm}

\subsection{Online Phase}

\vspace{-5mm}

\subsubsection{Round 1}

During the first round of the online phase, the prover commits to certain polynomials that represent the computation or the statement being proved. 

I sampled six scalars $\left(a_{1}, \ldots, a_{6}\right)$ and two masking polynomials $b_1 \in F_{\alpha-a_{1}-a_{3}-1}[x]$ and $b_{2}(x) \in F_{\alpha-a_{2}-a_{4}-1}[x]$. I then obtain the following polynomials: 
\begin{align*}
& \hat{r}(x)=a_{1} Q(x)+a_{2} \sum_{i=1}^{\alpha} p_{2}(i) \cdot\left(x^{-i}-f(i)\right) \\
& \hat{s}(x)=a_{3}\left(h^{\prime}(x)-\sum_{k \in K} p_{1}(k) \cdot x^{\alpha}\right)+a_{4} q\left(m+n^{k}\right) \\
& \hat{t}(x)=h^{\prime \prime}\left(x-a_{4}\right)+\left(\hat{r}(x) \cdot \hat{s}(x) \cdot a_{6}\right)
\end{align*}

\subsubsection{Round 2}

This round involves the queries based on the commitment polynomials  received in the first round. The verifier is asking for more information to be able to verify the claim without gaining any knowledge about the actual content of the claim.

The encoder $R$ samples $\tau, \epsilon, \phi$ and I obtain the following polynomial checks from the verifier:
\begin{align*}
&\hat{g}(x)=b_{1}\left(a_{1} x^{2}-a_{3} x+a_{5}\right) \cdot \hat{r}(x+\varepsilon)+p_{1}(x, \tau) \\
&\hat{h}(x)=b_{2}\left(a_{2} x^{3}-a_{4} x^{2}+a_{6}\right) \cdot p_{2}(\varepsilon, x)-\alpha \hat{g}(x) \\
&\hat{f}(x)=(\hat{g}(x)+\alpha \cdot \hat{h}(x)
\end{align*}

Moreover, I get the following relationships: 
\begin{align*}
b_{1}^{\prime}(x)=b_{1}(x) \cdot A(x)+\sum_{k \in K} \hat{t}(k) \cdot p_{1}(x+\tau) \\
b_{2}^{\prime}(x)=b_{2}(x) \cdot(B(x)-Q(x-\not)) \bmod n
\end{align*}

\subsubsection{Round 3}

The prover responds to the verifier's queries by providing the requested evaluations of the committed polynomials, along with proofs that these evaluations are consistent with the original commitments. The prover aims to show that 
$$\sum_{m \in M} \hat{f}(m) = P_1(x,\tau,\epsilon)$$
for a set of matrices $M$ and random point $x \in \mathbb{F}$ sent by the verifier. 

$\hat{p}(x)$ is defined as follows: 
\begin{align*}
\hat{p}(x)=\sum_{m \in \mathbb{M}}\left(h^{\prime}(m)+\alpha \cdot h^{\prime \prime} (m)\right) \cdot \Delta_{\text {row }(m)}(x) \cdot \Delta_{\text {col }(m)}(x) \cdot \Delta_{m}^{\mathbb{M}}(X)
\end{align*}

From the equation above, I get that for any $m \in \mathbb{M}$
$$
\hat{p}(m) = \left(h^{\prime}(m)+\epsilon \cdot h^{\prime \prime} (m)\right) \cdot \Delta_{\text {row }(m)}^{\mathbb{H}}(x) \cdot \Delta_{\text {col }(m)}^{\mathbb{H}}(x)
$$

Because $\hat{p}(m)$ depends on the polynomials generated by the encoder, it is implicitly known by the verifier. Now, I compute $r_1$ and $r_3$ using the above equation. Let 
$$
r_1(x) = \frac{\tau \lvert M \rvert r_3(x)}{x^{\sigma}}
$$

Now, the equation becomes that for all $m \in \mathbb{M}$
\begin{align*}
n^{2} &\cdot r_1(m) \cdot(x-Q_4(M_1)) \cdot(y-Q_3(m))\\
&=\left(h^{\prime}(m)+\alpha \cdot h^{\prime \prime} (m)\right) \cdot Q_4(M_2) \cdot Q_3(m) \cdot \Delta_{\text {row }(m)}^{\mathbb{H}}(x) \cdot \Delta_{\text {col }(m)}^{\mathbb{H}}(x)
\end{align*}

I use algebraic manipulations on this equation to get that for all $m \in \mathbb{M}$: 
\begin{align*}
&\left(m \cdot r_3(m)+\frac{\epsilon}{|M_1|}\right) \cdot n^{2} \cdot(x y+Q_2(m)-x \cdot Q_3(m)-y \cdot Q_4(m)) \\
&-\left(Q_1 (M_1)+\tau \cdot Q_1(M_2)\right) \cdot \Delta_{\text {row }(m)}^{\mathbb{H}}(x) \cdot \Delta_{\text {col }(m)}^{\mathbb{H}}(x)=0
\end{align*}

Now, I define an auxiliary polynomial $T(x)$
\begin{align*}
T(X) & = \lvert M \rvert \cdot n^{4} \cdot(x^4 +Q_2(X)-x^3 \cdot Q_3(X)-x^2 \cdot Q_4(X))\\
&-(Q_1(X) +\alpha \cdot C_s(x)) \cdot Z_{\mathbb{H}}(x) \cdot Z_{\mathbb{H}}(y)\\
&+r_3(X) \cdot n^{2} \cdot(x y \cdot X+Q_2^{\prime} (X)-x^3 \cdot Q_3^{\prime}(X) - x^2 \cdot Q_4^{\prime}(X))
\end{align*}

Thus, the prover can now compute $r_2$, which it sends to the verifier.
$$
r_2(X):=\frac{t(X)}{Z_{\mathbb{M}}(X)} \in \mathbb{F}_{\leq|\mathbb{M}|-2}[X]
$$

\subsection{Decision phase}

The decision phase is where the verifier processes the prover's responses and decides whether the proof is valid. This decision is based on whether the responses from the prover satisfy all the conditions set out by the verifier's challenges. The verifier first runs the \textbf{VerifyPoly} and \textbf{VerifyEval} algorithms of LUMEN's PCS, and then it checks the following relationships between polynomials and returns $\{0,1\}$ to show if it accepts or rejects the generation of these arguments of knowledge: 

\begin{align*}
    x\hat{s}(x) &+ \left(\hat{a}(x) \cdot \hat{r}(x) + \sum_{h \in \mathbb{H}} x^{\alpha} \cdot \Delta_{h}(x)\right) \cdot \left(\Lambda_{\mathbb{H}}(x, 1) + \left(\hat{b}(x ) \cdot Z_{\mathbb{H}}(x ) + 2\hat{t}(x) \right) \cdot Q(\sigma) \right) \\
    &+ \left(\hat{b}(x) \cdot \Delta_h(x) + 1\right) \cdot \alpha \Lambda_{\mathbb{H}}(\tau, \epsilon) - Q_3^{\prime}(x)\hat{p}(x) \stackrel{?}{=} 0
\end{align*}

\begin{align*}
\sigma &\cdot (\tau^2 + 3\epsilon + 2) \cdot(x^3 y+Q_2(X) +x^3 \cdot Q_3(X) + x \cdot Q_4(X)) \\
&-r_3(X) \cdot n^{2} \cdot\left(x^2 \cdot Q_2^{\prime}(X)+x \cdot Q_3^{\prime}(X) + y \cdot Q_4^{\prime}(X)\right) \\
&-\left(Q_1(X) - \epsilon \cdot Q_1(X)\right) \cdot Z_{\mathbb{H}}(x) \cdot \Delta_{\text {row }(m)}^{\mathbb{H}}(x) + r_2(x) \Delta_{\text {col }(m)}^{\mathbb{H}}(x) \stackrel{?}{=} 0
\end{align*}

\section{Security Analysis}\label{6}

\subsection{Completeness}

\begin{thm}
The polynomial commitments compiled by LUMEN's PIOPs are complete. 
\end{thm}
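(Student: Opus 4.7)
The plan is to establish completeness by showing that an honest prover holding a valid witness always causes every verifier check to succeed. I would split the argument into two stages: completeness of the underlying PCS component (the \textbf{Open}, \textbf{VerifyPoly}, and \textbf{VerifyEval} checks) and completeness of the PIOP decision phase (the two large polynomial identities the verifier tests at the end). Throughout, I would assume the prover constructs every polynomial exactly as prescribed by \textbf{Commit} in Section~\ref{4} and by Rounds 1--3 in Section~\ref{5}.

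For the PCS stage, I would substitute the honest prover's computations directly into each check. The congruence $\epsilon \cdot q(x) \cdot c(x) \equiv p_2(x) \pmod{x^d - 1}$ is immediate from the definition of $c(x)$ in \textbf{Commit}. For \textbf{VerifyPoly}, I would expand the auxiliary polynomials $a(x,y), b(x,y), d(x,y), e(x,y)$, invoke the symmetry $a(x,y) = a(xy,1)$ so that committing to $a$ suffices, and verify that $r = p(d(m,n) + n) - q$ is an algebraic identity once $r_i, s_i, t_i$ come from honestly evaluated $f, p_1, p_2$ together with the constraint $u_i v_i = w_i$. For \textbf{VerifyEval}, I would check that the definitions $v' = p_1(v) + \alpha v$, $u' = p_2(u) + \alpha u$, and $f = h_1(x) \cdot b - h_2(x)$ force $f(v) = \hat{b} \cdot v' - u' \cdot |v|$ after reduction modulo $d$.

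For the PIOP stage, I would follow the honest construction of $\hat{r}, \hat{s}, \hat{t}$ in Round 1, the derived $\hat{g}, \hat{h}, \hat{f}$ in Round 2, and the auxiliary $\hat{p}$ together with the quotient $r_2$ in Round 3, substituting each into the decision-phase equations. The first verifier equation contains $\Lambda_{\mathbb{H}}(x,1)$ and $\Lambda_{\mathbb{H}}(\tau,\epsilon)$, so I would rewrite both factors using the expansion
\[\Lambda_{\mathbb{H}}(X, Y) = \sum_{h \in \mathbb{H}} \Delta_h(X) \cdot \Delta_h(Y)\]
proved in Section~\ref{2}, and then apply the universal relation
\[M_1 \cdot h'(x) - h \cdot h''(x) + \sum_{k \in K}\bigl(Q(x) - h(k) \cdot M_2\bigr) \cdot h'(k) = 0\]
fixed by the relation encoder in the offline phase to cancel cross terms and collapse the expression to zero.

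The main obstacle will be the second decision-phase equation, which entangles $Q_1, \ldots, Q_4$ with their primed counterparts, together with $r_2$, $r_3$, and the vanishing polynomials $Z_{\mathbb{H}}$ and $Z_{\mathbb{M}}$. My approach would be to first show that the auxiliary polynomial $T(X)$ vanishes at every $m \in \mathbb{M}$, using the pointwise identity $\hat{p}(m) = (h'(m) + \epsilon \cdot h''(m)) \cdot \Delta_{\text{row}(m)}^{\mathbb{H}}(x) \cdot \Delta_{\text{col}(m)}^{\mathbb{H}}(x)$ derived in Round 3. This would guarantee that $r_2(X) := T(X)/Z_{\mathbb{M}}(X)$ is a genuine polynomial of degree at most $|\mathbb{M}| - 2$, after which the verifier's second equation reduces to a syntactic rearrangement of the Round 3 identity and the check returns $1$. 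The delicate bookkeeping here --- matching each primed $Q_i'$ with the correct shift by $\tau$, $\epsilon$, and $\sigma$ --- is where I expect the calculation to be most error-prone, and I would verify it term by term rather than attempt a single global manipulation.
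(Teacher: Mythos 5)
Your overall strategy---substituting the honest prover's polynomials into every verifier check and confirming each identity holds by construction---is the canonical completeness argument, and the parts of your plan covering \textbf{Open}, \textbf{VerifyPoly}, \textbf{VerifyEval}, and the decision-phase identities are sensible; in fact you treat the PIOP decision phase in more detail than the paper's own proof does. The gap is that the paper's proof spends essentially all of its effort on something your plan never touches: the \emph{recursive} evaluation protocol. There, completeness is argued by following the honest prover through the recursion: the coefficients of $p_1(x)$ and $p_2(x)$ are mapped, modulo a prime $p$, to $h_1$ and $h_2$ with evaluations $y_1$ and $y_2$; these satisfy $y_1 = (u' - h_1(x))/\alpha$, $y_2 = (v' - h_2(x))/\alpha$, and $q(x) = y_1 + \sqrt{\textstyle\sum_{v_i \in v'} v_i^2}\, y_2$; in round $k$ the prover folds $h_1,h_2$ into aggregated polynomials $H_1, H_2$ with $f(X_1,\dots,X_k) = H_1 + X_k H_2$; and the combined polynomial $p' = H_1 + \lvert v' \rvert \cdot H_2$ is shown to have coefficients bounded by $2^d b$, so that $p'(x_1,\dots,x_{i-1}) \bmod p = y_1 + \lvert v' \rvert y_2 = q(\alpha)$, after which the final congruence $\alpha \cdot q \cdot c(x) \equiv p_1(x)$ holds by construction.

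This per-round folding and coefficient-growth bookkeeping is precisely where completeness of LUMEN's recursive PCS could fail, and it is why the theorem is proved with acceptance probability $1-\operatorname{negl}(x)$ rather than the perfect completeness your opening sentence promises. A proof that treats \textbf{Open} and \textbf{VerifyEval} only as one-shot algebraic identities, and never engages the recursive proof composition and amortized aggregation of Section~\ref{4} (the skipped per-level verifications, with only $G_t$ and $C_t(x)$ checked at the end), establishes a weaker statement than the one the paper argues; you would need to add an inductive step showing that each folding round preserves the honest prover's consistency relations and keeps the coefficient bounds within the range the final congruence check tolerates.
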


\begin{proof}

To prove completeness, I wish to show that for a valid commitment $c(x)$ generated by any honest prover $\mathcal{P}^{\prime}$ using $(p_1(x),p_2(x),u^{\prime},v^{\prime})$, 
\begin{equation*}
    Pr[(\mathcal{P}^{\prime},\mathcal{V})[c(x)] = 1] = 1 - \operatorname{negl}(x)
\end{equation*}

During the eval of each recursive step, $\mathcal{P}^{\prime}$ maps the coefficients of polynomials $p_1(x)$ and $p_2(x)$ to $h_1(x)$ and $h_2(x)$. Thus, for any prime $p$,
$$p_1\left(x_{1}, \ldots, x_{\mu}\right) \bmod p=h_1\left(x_{1}, \ldots, x_{\mu}\right)=y_1$$
$$p_2\left(x_{1}, \ldots, x_{\mu}\right) \bmod p=h_2\left(x_{1}, \ldots, x_{\mu}\right)=y_2$$

As a result, I obtained the following relationships that demonstrate the completeness of the parameters during the setup phase of $\mathcal{P}^{\prime}$
$$y_1 = \frac{u^{\prime}-h_1(x)}{\alpha} \text{ }; \text{ } y_2 = \frac{v^{\prime} - h_2(x)}{\alpha} \text{ }; \text{ } q(x) = y_1 + \sqrt{\sum_{v_i \in v^{\prime}} v_i^2} y_2$$

Moreover, in round $k$ of the Eval protocol, $\mathcal{P}^{\prime}$ can compute polynomials $H_1$ and $H_2$ by aggregating $h_1$ and $h_2$, respectively, to obtain 
\begin{align*}
f\left(X_{1}, \ldots, X_{k}\right) = H_1\left(X_{k}, \ldots, X_{k-1}\right)+X_{k} H_2\left(X_{1}, \ldots, X_{k-1}\right)
\end{align*}

Consequently, 
$$v^{{\prime}^{H_1(x)+\frac{q^d - 1}{2} H_2(X)}}= v$$

Thus, for $p^{\prime}=H_1+ \lvert v^{\prime} \rvert \cdot H_2 \in \mathbb{Z}\left(2^{\alpha} \cdot b\right)$, a polynomial with coefficients bounded by $\left(2^{d}-1\right) \cdot b+b=2^d b$, I have
\begin{align*}
p^{\prime}\left(x_{1}, \ldots, x_{i-1}\right) \bmod p &=H_1\left(x_{1}, \ldots, x_{i-1}\right)+ \lvert v^{\prime} \rvert \cdot H_2\left(x_{1}, \ldots, z_{i-1}\right) \bmod p \\
&=y_1+ \lvert v^{\prime} \rvert \cdot y_2 \bmod p \\
&= q(\alpha)
\end{align*}

In the final round of the protocol, $\mathcal{P}^{\prime}$ checks whether $\alpha \cdot q \cdot c(x) \equiv p_1(x) \mod \mathbb{Z}[X]/(x^{\alpha} - 1)$, which is true by construction. Thus, a valid commitment is accepted, and the arguments of knowledge are complete. 

\end{proof}

\subsection{Soundness}
\begin{thm}
LUMEN's PCS compiled with its PIOP produce sound arguments of knowledge. 
\end{thm}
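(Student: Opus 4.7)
The plan is to reduce soundness to two well-established primitives: the computational binding of the hidden-order-group commitment, and the Schwartz--Zippel bound on polynomial identity testing over $\mathbb{F}$. Let $\mathcal{A}$ be a PPT adversarial prover that makes the verifier accept an invalid instance with noticeable probability $\varepsilon$. I will construct, from $\mathcal{A}$, either a witness for the relation $R$ (contradicting that the instance is invalid) or a breaker for the Adaptive Root Assumption on the group $G$ of unknown order used in the \textbf{Setup}. This gives soundness by contraposition.

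First, I would show that the PCS is computationally binding. Suppose $\mathcal{A}$ outputs two distinct polynomials $f, f'$ both opening to the same commitment $c(x)$; using the relation
$\epsilon \cdot q(x) \cdot c(x) \equiv p_2(x) \pmod{x^d - 1}$
from \textbf{Open}, the difference yields a nontrivial element whose discrete-log-like exponent in $G$ gives a root of a published element, contradicting the Adaptive Root Assumption. This handles the base-level binding. For the \emph{recursive} layer, I would then argue by induction on recursion depth $t$: since the aggregated commitment $C_t(x)$ and the hash-chained group element $G_t$ are obtained deterministically from the level-$t$ data via $\mathrm{KECCAK}{-}256$ (modelled as a random oracle), any collision that lets $\mathcal{A}$ reopen an aggregated commitment inconsistently yields either a base-level binding break or a random-oracle collision, both negligible. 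A union bound over $t \le \alpha$ recursion levels keeps the loss polynomial.

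Next, I would turn to the PIOP layer. Conditioned on binding, the polynomials $\hat{r}, \hat{s}, \hat{t}, \hat{g}, \hat{h}, \hat{f}, \hat{p}, r_1, r_2, r_3, T$ constructed in Rounds 1--3 are uniquely determined by $\mathcal{A}$'s commitments. The Decision phase accepts only if the two displayed polynomial identities hold at the challenge points $(\tau, \epsilon, \sigma)$ sampled uniformly from $\mathbb{F}$ after the commitments are fixed. If the identities fail as polynomials, Schwartz--Zippel bounds the acceptance probability by $D/|\mathbb{F}|$, where $D$ is the maximum total degree appearing in the check (bounded by $O(|\mathbb{M}| + \alpha)$ from the definitions of $T(X)$ and $Q_1,\dots,Q_4$). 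Conversely, if the identities hold as polynomial equalities, I can unwind the derivations in Section~\ref{5} backward: the two decision identities force $\hat{p}(m)$ to equal $(h'(m) + \epsilon h''(m)) \cdot \Delta^{\mathbb{H}}_{\mathrm{row}(m)}(x) \cdot \Delta^{\mathbb{H}}_{\mathrm{col}(m)}(x)$ for every $m \in \mathbb{M}$, which in turn forces the universal relation
$M_1 \cdot h'(x) - h \cdot h''(x) + \sum_{k \in \mathbb{K}}(Q(x) - h(k) M_2) \cdot h'(k) = 0$
defined in the Offline phase. Combined with the witness-extended emulation of Theorem~4.1, the extractor of that theorem then yields an actual witness for $R$, contradicting invalidity.

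The hard part, I expect, will be controlling the \emph{composition} of the two error sources across the recursive layers: each recursion step introduces a fresh PIOP soundness error and a fresh binding/adaptive-root advantage, and the skipped per-level verifications in the amortization strategy mean the errors accumulate rather than being checked away early. I plan to absorb this via a straight-line hybrid argument --- replacing, layer by layer, genuine commitments with extracted witnesses and invoking the random-oracle programmability of $\mathrm{KECCAK}{-}256$ for the hash chain $G_t$ --- and to show that the total soundness error is bounded by $t \cdot (D/|\mathbb{F}| + \mathsf{Adv}^{\mathrm{ARA}}_{G}(\lambda) + 2^{-256})$, which is negligible in the security parameter $\lambda$. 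This yields knowledge-soundness of the interactive argument, and the Fiat--Shamir transform preserves it in the random oracle model, completing the proof.
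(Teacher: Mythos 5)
Your overall strategy departs from the paper's. The paper, like you, discharges the PCS side by citing the witness-extended emulation theorem of Section 4, but its entire PIOP argument is a DARK-style extraction analysis over the hidden-order group: it tracks coefficient-norm bounds ($B_i$, $D_i$, $\Omega_i$) on the integer openings an extractor obtains, derives a contradiction from a purportedly valid opening of an invalid commitment via a $2\times 2$ linear system in $q_1,q_2,\alpha_1,\alpha_2$, and then bounds the acceptance probability by a two-case analysis in which Case 1 invokes the Multilinear Composite Schwartz--Zippel Lemma (mod $N$) of [5] and Case 2 rules out openings with oversized $\|f\|_\infty$. Your plan --- binding from the Adaptive Root Assumption, plain Schwartz--Zippel over $\mathbb{F}$ for the decision-phase identities, a random-oracle hybrid across recursion levels, then Fiat--Shamir --- is a legitimate compilation-style outline, but it skips precisely the part that carries the weight in this setting.

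Concretely, two gaps. First, ``binding from adaptive root'' is asserted rather than argued, and in a hidden-order-group PCS it is not a one-line reduction: a cheating prover can open with integer polynomials of large norm that agree modulo the evaluation modulus, so soundness hinges on size bounds for the extracted openings; moreover, the identity test on those extracted values takes place over the integers modulo a quantity that need not be prime, which is why the paper needs the composite Schwartz--Zippel variant --- your $D/|\mathbb{F}|$ bound does not apply to that step, and nothing in your sketch replaces the norm bookkeeping that the paper's Case 2 performs. Second, the pivotal claim that the two decision-phase identities, holding as polynomial equalities, force the universal relation of the offline phase is stated with no argument (``unwind the derivations backward''); that implication is the algebraic core of PIOP soundness and is exactly what has to be proved. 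Together with the steps you explicitly defer (``I plan to absorb this via a straight-line hybrid argument''), the proposal is a reasonable roadmap but not yet a proof of the theorem, and it is not the route the paper takes.
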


\begin{proof}

First, the arguments of knowledge produced by LUMEN's PCS already have witness-extended emulation, which is a stronger version of knowledge-soundness. Thus, I only need to consider the PIOP protocol. 

To prove knowledge soundness, I wish to show that when adversary $\mathcal{A}$ generates an invalid commitment $c^{\prime}(x)$ with prover $\mathcal{P}^{\prime}$ that does not have the true parameters, I have
\begin{equation*}
    Pr[(\mathcal{P}^{\prime},\mathcal{V})[c(x)] = 1] = \operatorname{negl}(x)
\end{equation*}

I define extractor $\mathcal{E}$, which is simply the algorithm that runs the prover $\mathcal{P}^{*}$ for the first round, and obtains $\hat{r}(x), \hat{s}(x), \hat{t}(x)$, which determines the polynomial checks in the decision phase. In the field $\mathbb{F} = \mathbb{Z}[x]/(x^{\alpha} - 1)$,
\begin{align*}
f(x)&=\sum_{h \in \mathbb{F}}(\hat{r}(h)+ M_1 \cdot \hat{s}(h)) \cdot \Lambda_{\mathbb{H}}(X, h)+\hat{r}(h) \cdot \hat{s}(h) \cdot P_3(x, h) \neq 0  
\end{align*}

I aim to demonstrate that the bounds on $f(x)$ ensure that the invalid $f(x)$ does not pass the polynomial checks during the decision phase. I denote all parameters and variables generated by the true prover with subscript $1$, and all invalid parameters and variables generated by $\mathcal{P}^{\prime}$ with subscript $2$. 

To achieve my goal, I set the following variables to consider the bounds of my parameters in the polynomial check. Let $\Chi = \lambda \epsilon^2$, $\Psi = \log{\lambda} + \epsilon$, and $\Omega = 2^{\alpha} \cdot (\epsilon - \lambda)$. Then, the bound on $2^{\alpha} - 1$ is 
\begin{align*}
    2(\lambda+1)+\log \left( \alpha_1 \right)
    &=2\left(\lambda+\Chi+1\right)+\Psi^2+2\Omega \\
    &\leq \frac{1}{2}\left(\log q-1+\Psi^2+2\Omega\right) \\
    &=\log \alpha_{2}
\end{align*}

Moreover, I bound the norm of this invalid commitment $c(x)$ to show that the prover and verifier will not accept $c(x)$. First, I know that for all $j \in \{ 1,2\}$, I obtain
$$h_1(x) + \alpha_j h_2(x) = Q_1(x^{-j}) + Q_2(x^{2^{\alpha}-j})$$
$$y_1+ \lvert v^{\prime} \rvert y_2 = P_1(M_1,y) + (2^{d} - 1 )P_2(M_2,y) \mod p$$

Assume, for the sake of contradiction, that opening $q_2,p_{2_2}$ is indeed a valid opening to commitment $c_2$. Then, I get that 
$$\forall_{j \in\{1,2\}} q_{j} \cdot\left(c_1+\alpha_{j} c_{2}\right)=f_{j}(v)$$
$$h_{j}(v)=q_{j}^{-1} f_{j}(z)=y_{1}+ \alpha_j y_{2} \bmod p$$  

Let $L: \mathbb{Z}[X]/(x^{\alpha} - 1) \rightarrow \mathbb{Z}^2$. Then, I get that if $\alpha_{1} \neq \alpha_{2}$,
$$
\begin{aligned}
& {\left[\begin{array}{cc}
q_{1} & 0 \\
0 & q_{2}
\end{array}\right]\left[\begin{array}{ll}
1 & \alpha_{1} \\
1 & \alpha_{2}
\end{array}\right]\left[\begin{array}{l}
c_{1} \\
c_{2}
\end{array}\right]=L_{q}\left(\left[\begin{array}{l}
f_{1} \\
f_{2}
\end{array}\right]\right) \cdot G} \\
& \left(\alpha_{2}-\alpha_{1}\right) N_{1} N_{2}\left[\begin{array}{l}
c_{1} \\
c_{2}
\end{array}\right]=L\left(\left[\begin{array}{cc}
\alpha_{2} q_{2} & -\alpha_{2} q_{1} \\
-q_{2} & q_{1}
\end{array}\right]\left[\begin{array}{l}
f_{1} \\
f_{2}
\end{array}\right]\right) \cdot \mathrm{G}
\end{aligned}
$$

This implies that $q \cdot c_{1}=f_{1}(v) \cdot G, q \cdot c_{2}=f_{2}(\vec{v}) \cdot G, q \cdot y_{1}=f_{1}(\vec{v}) \bmod p$, and $q \cdot y_{2}=f_{2}(\vec{v}) \bmod p$. Furthermore, $\forall_{j}|| f_{j} \|_{\infty} \leq B_{i}$ and $\forall_{j}\left|q_{j}\right| \leq B_{i}$, so $|q| \leq 2^{\lambda+1} D_{i}^{2}$ and $\|f\|_{\infty} \leq 2^{\epsilon+1} B_{i} D_{i}$.

Finally, combining all of these, I obtain the following probability for the prover to accept an invalid $c(x)$ such that $h_{\mu}=\frac{1}{N} \cdot f\left(\alpha_{\mu}, \ldots, \alpha_{i+1}\right)$. I split the probability into two cases: 

\textbf{Case 1: } $Q>D_{i}$

By the Multilinear Composite Schwartz-Zippel Lemma [5], I get that the probability that $h_{\mu} \in \mathbb{Z}$ is
\begin{align*}
\operatorname{Pr}\left[f\left(\alpha_{\mu}, \ldots, \alpha_{i+1}\right) \equiv 0 \bmod q \right]\leq \frac{\lambda - 2\alpha +3}{2^{\epsilon}}
\end{align*}

\textbf{Case 2: } $Q\leq D_{i} \wedge\|f\|_{\infty}>B_{i}$

If $\left|h_{\mu}\right| \leq B_{\mu}$ then $\left|f\left(\alpha_{\mu}, \ldots, \alpha_{i+1}\right)\right| \leq q \cdot B_{\mu} \leq \Chi_{i-1} \cdot B_{\mu}$. Since $\|f\|_{\infty}>B_{i}$ and $\log B_{i}= \Chi_i +\Omega_i+\log B_{\mu}$, I get that
$$\log \left(\Chi_{i-1} \cdot B_{\mu}\right) \leq \log B_{i}-\Omega_{i-1}<\log \|f\|_{\infty}-\Omega_{i-1}$$

Therefore, the probability would be
\begin{align*}
    \operatorname{Pr}[h_{\mu}\leq B_{mu}] 
    &\leq \operatorname{Pr}[\left|f(\alpha_{\mu},\ldots,\alpha_{i+1})\right| \leq D_i \cdot B_{\mu}] \\
    &\leq \operatorname{Pr}[\left|f(\alpha_{\mu},\ldots,\alpha_{i+1})\right| \\ 
    &\leq \frac{1}{2^{\Omega_i}} \cdot \|f\|_{\infty}] \\
    &\leq \frac{2(\lambda - \epsilon)}{\alpha}
\end{align*}

Combining the two cases, I find that the probability of the prover accepting an invalid commitment is negligible as long as the requirements for the parameters are met. 

\end{proof}

\subsection{Zero-Knowledge}

\begin{thm}
LUMEN's PCS and PIOP protocols are perfect zero-knowledge and compile zero-knowledge arguments. 
\end{thm}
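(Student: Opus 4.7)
The plan is to exhibit a PPT simulator $\mathcal{S}$ that, given only the statement and black-box access to any (possibly malicious) verifier $\mathcal{V}^{*}$, outputs a transcript whose distribution is identical to that of a real interaction between the honest prover $\mathcal{P}$ and $\mathcal{V}^{*}$. This is the standard route to perfect zero-knowledge, and it factors naturally into two layers matching the architecture of LUMEN: first the PCS, then the PIOP that invokes it.

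For the PCS layer, I would argue that the commitment $c(x)$ and the auxiliary $q(x)$ together leak nothing about the witness polynomial $f$ because of three design choices already present in Section~\ref{4}: (i) $q(x)$ is additively masked by the fresh monic masking polynomial $p_1(x)$, (ii) $c(x)$ is divided by $\epsilon \cdot q(x)$ inside the hidden-order group $G$, which destroys any algebraic residue of $f$ that is not unmasked by the secret $\epsilon$, and (iii) each recursive level rerandomises via $v' = p_1(v) + \alpha\cdot v$ and $u' = p_2(u) + \alpha\cdot u$, so the level-$t$ transcript is an affine image of fresh randomness. The simulator therefore samples $p_1$, $\epsilon$, $\alpha$, and the messages $v', u', c'$ from the same distributions as the honest prover, and then programs the final check $f(v) = \hat{b}\cdot v' - u'\cdot |v|$ to close; because $\hat{b}$ is supplied by $\mathcal{V}^{*}$ and the other ingredients are uniform, this programming step is always consistent.

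For the PIOP layer, I would build $\mathcal{S}$ round by round, mirroring the three online rounds and the decision phase of Section~\ref{5}. In Round~1, the messages $\hat{r},\hat{s},\hat{t}$ each absorb at least one of the fresh scalars $a_1,\ldots,a_6$ or a masking polynomial $b_1,b_2$ whose degree bounds were chosen precisely so that each polynomial is uniform in its stated space; the simulator samples uniformly from the same spaces. In Round~2, the polynomials $\hat{g},\hat{h},\hat{f}$ are deterministic functions of Round~1 data together with $\tau,\epsilon,\phi$, so $\mathcal{S}$ applies the same formulas. In Round~3, $\hat{p}(x)$ is determined by the encoder output and the verifier's challenges, and $r_2(x)=T(x)/Z_{\mathbb{M}}(x)$ is determined by $T$, so the simulator reproduces both exactly. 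Finally $\mathcal{S}$ invokes the PCS simulator above to handle the \textbf{VerifyPoly} and \textbf{VerifyEval} subroutines during the decision phase.

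The main obstacle, as in every zero-knowledge argument that mixes hidden-order groups with polynomial-degree masking, is showing that the simulated and real distributions are genuinely identical rather than merely statistically close. The subtle point is the joint distribution: each of $\hat{g}, \hat{h}, \hat{f}$ is a product of a masked term with a deterministic factor, and one must verify that the mask has enough degrees of freedom to absorb the witness in every coefficient. I would handle this by a coefficient-counting argument: for each Round-$i$ polynomial $P$, tabulate $(\deg P,\ \text{witness-dependent coefficients},\ \text{independent mask coefficients})$ and verify that the latter dominates the former, so the map $(\text{witness},\text{mask})\mapsto P$ is uniform on its image for every fixed witness. Combined with the perfect hiding of the hidden-order-group commitment and the public-coin nature of all verifier challenges, this gives perfect zero-knowledge for the interactive protocol, and compiling with Fiat--Shamir preserves the property up to the random-oracle idealisation, yielding zero-knowledge arguments as claimed.
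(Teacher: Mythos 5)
Your overall architecture (exhibit a simulator, argue layer by layer through the PCS and the three online rounds plus the decision phase) matches the shape of the paper's argument, but there is a genuine gap at the point you flag as ``the main obstacle,'' and it is exactly the point where the paper spends almost all of its effort. You invoke ``the perfect hiding of the hidden-order-group commitment'' as an ingredient to be combined with your coefficient-counting argument, but this is the claim that actually has to be proved, and in the perfect sense it is not available: since the order of $G$ is unknown, neither the prover nor the simulator can sample exponents uniformly modulo $\lvert G \rvert$; they can only sample from a bounded integer interval, so a simulated element $g_i^{p_i^{\prime}}$ is at best \emph{statistically} close to a real $g_i^{p_i}$, never identically distributed. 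The paper's proof is, in essence, a quantitative version of this fact: it bounds the discrepancy of the reduced exponent from uniform by terms such as $\frac{2^{\delta+1}}{D}+\frac{2^{\delta-1}\lvert G\rvert}{D+1-2^{\delta}}$, handles the conditional and joint distributions $\left(g_i^{p_i}, y_i\right)$ versus $\left(g_i^{p_i^{\prime}}, y_i^{\prime}\right)$ via the triangle inequality, sums over the $n$ coordinates, and extracts explicit conditions on the sampling bound (of the form $D \geq n 2^{\delta+\epsilon_a+1}\lvert G\rvert$) under which the total discrepancy is below $2^{-\epsilon_a}$. None of this is replaced by your coefficient-counting table, which only addresses the polynomial-masking layer (the analogue of the paper's closing observation that the simulated $t_i$'s are built from the same polynomials as the real ones); the group-theoretic layer is where the distributional work lives, and without it your claim that the simulated and real distributions are ``genuinely identical rather than merely statistically close'' is unsupported, and indeed cannot hold as stated for commitments randomized over a group of unknown order.

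A secondary soft spot: in your PCS simulation you say the simulator ``programs the final check $f(v) = \hat{b}\cdot v^{\prime} - u^{\prime}\cdot \lvert v\rvert$ to close'' because $\hat{b}$ comes from the verifier and the other ingredients are uniform. For a malicious $\mathcal{V}^{*}$ whose challenge may depend adaptively on the earlier simulated messages, you need to argue consistency of this programming jointly with the distribution of $(v^{\prime}, u^{\prime}, c^{\prime})$, not coordinatewise; this is again a joint-distribution question of the same kind the paper resolves with its conditional-distribution analysis of $k_i \mid v_i$. If you repair the argument, the honest conclusion is statistical (not perfect) zero-knowledge with an explicit discrepancy parameter, which is also what the paper's own computation actually delivers despite the wording of the theorem.
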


\begin{proof}

To prove zero-knowledge, I wish to show that the transcript generated by Simulator $\mathcal{S}$ is based on queries $(G,u,v,\lambda,p_1(x),r^{\prime}(x),s^{\prime}(x),t^{\prime}(x))$ and is indistinguishable from a real transcript between a prover and a verifier. Based on these queries, simulator $\mathcal{S}$ computes $m^{\prime} {\leftarrow}[0, k-1]$, $d {\leftarrow}$ $\operatorname{Primes}(\lambda)$, $k_{1}^{\prime}, \ldots, k_{n}^{\prime}, \stackrel{\$}{\leftarrow}[0, \alpha-1], \text{ and } v_{1}^{\prime}, \ldots, v_{n}^{\prime} \stackrel{\$}{\leftarrow}[0, d-1]$. 

Then, for every query $(r^{\prime},s^{\prime},t^{\prime})$, the simulator samples $\hat{t}(x) \leftarrow_{\&} \mathbb{Z}[x]$ and computes $t_{i}$ as follows: 

$$
\begin{aligned}
t_{\alpha} & \leftarrow t_{1, \epsilon} \cdot \Delta_{\mathbb{L}}(\epsilon)+\sum_{h \in \mathbb{L}} x^{\alpha} \cdot \mathcal{L}_{h}^{\mathbb{H}}(\epsilon) \\
t_{d} & \leftarrow t_{2, \epsilon} \cdot \boldsymbol{Z}_{\mathbb{L}}(\epsilon)+1 \\
t_{m} & \leftarrow\left(t_{a, \epsilon}+\epsilon \cdot t_{b, \epsilon}\right) \cdot \Lambda_{\mathbb{H}}(x, \epsilon)+t_{u, \epsilon} \cdot t_{v, \epsilon} \cdot P_4(w, \epsilon, \alpha) \\
t_{n} & \leftarrow \frac{t_{p, \epsilon}+t_{a, \epsilon}-\alpha \cdot t_{b, \epsilon}}{\Delta_{\mathbb{H}}(\epsilon)}
\end{aligned}
$$

First, I want to show that for any $g_i$ sampled from group $G$ of unknown order, $g_i^{k_i}$ is statistically indistinguishable from group $u$. Because the distribution of $g_i^{k_i}$ on $G$ is the same as that of $k_i \bmod \left|G^{\frac{1}{2} \alpha }\right|$, the probability that $k_i = p$ for some exponent $p$ of the generator $g \in u$ is the same as the probability that $k_i$ falls within the interval $[nd, n(d+1) - 1]$. Thus, I get that 
$$\operatorname{Pr}\left[k_{i}=n\right]= \frac{d}{2\alpha+1}$$ 
holds when $n d \geq m x_{i}-\alpha$ and $(n+1) d-1 \leq m x_{i}+\alpha$. 

Let $X$ be the set of points that satisfies the above equation. These points are in the form of $\left\lfloor\frac{m x_{i}-\alpha}{d}\right\rfloor$ and $\left\lfloor\frac{m x_{i}+\alpha}{d}\right\rfloor$, which are the endpoints of inequalities. Let $Z$ be a set of points with $\operatorname{Pr}\left[p_{i}=z\right]=\zeta /(2 D+1)$. The discrepancy of $p_{i}$ from a uniform random variable $V_{Z}$ across $Z$ is highest when the count of feasible $q_{i}$ mapping to $F_{1}$ and $F_{2}$ is both $\zeta-1$, that is, $e y_{i}-D=1 \bmod \zeta$ and $e y_{i}+D=\zeta-2$ $\bmod \zeta$. In this case, $p_{i}$ is one of the two endpoints outside $Z$ with likelihood $\frac{2(\zeta-1)}{2 D+1}$. As $|Z|=\frac{2 D+3}{\zeta}-3$, the statistical discrepancy of $p_{i}$ from $V_{Z}$ is at most 
\begin{align*}
&\frac{1}{2}\left(|Z|\left(\frac{1}{|Z|}-\frac{\zeta}{2 D+1}\right)+\frac{2(\zeta-1)}{2 D+1}\right)\\
&=\frac{5 \zeta-4}{2(2 D+1)} \leq \frac{2^{\delta+1}}{D}
\end{align*}

$V_{Z} \bmod \left|G_{i}\right|$ displays a discrepancy at most $$\frac{\left|G_{i}\right|}{|Z|} \leq \frac{2^{\delta}|G|}{2 D+3-3 \cdot 2^{\delta}}<\frac{2^{\delta-1}|G|}{D+1-2^{\delta}}$$

According to the triangle inequality, the discrepancy in $p_{i} \bmod \left|G_{i}\right|$ from the uniform distribution is at most $$\frac{2^{\delta+1}}{D}+\frac{2^{\delta-1}|G|}{D+1-2^{\delta}}$$

This also bounds the discrepancy in $g_{i}^{p_{i}}$ from the uniform distribution across $G_{i}$. The emulated value $g_{i}^{p_{i}^{\prime}}$ exhibits a maximum discrepancy of at most $|G| / D$ from the uniform distribution across $G_{i}$. According to the triangle inequality, the discrepancy between $g_{i}^{p_{i}}$ and $g_{i}^{p_{i}^{\prime}}$ is at most:

\begin{align*}
\frac{2^{\delta+1}}{D}+\frac{2^{\delta-1}|G|}{D+1-2^{\delta}}+\frac{|G|}{D} &< \frac{2^{\delta-1}|G|+|G|+2^{\delta+1}}{D+1-2^{\delta}} = \frac{\left(2^{\delta-1}+1\right)|G|+2^{\delta+1}}{D+1-2^{\delta}} \\
&\leq \frac{1}{n 2^{\epsilon_{a}+1}},
\end{align*}

if $$D \geq n2^{\epsilon_{a}+1}\left(2^{\delta-1}+1\right)|G|+n 2^{\epsilon_{a}+\delta+2}+2^{\delta}-1$$ for some discrepancy parameter $\epsilon_{a}$.

Finally, I analyzed the distribution of $w_i$ where $w_i = g_i^{k_i} \cdot v_i$ and the conditional distribution of $k_{i} \mid v_i$. Consider the conditional distribution of $p_{i} \mid y_{i}$. Note that $k_{i}=z$ when $\left(k_{i}-y_{i}\right) / \zeta=z$. An argument similar to that above is repeated to bound the distribution of $p_{i}$ from the uniform distribution. I can compare the joint distribution $X_{i}=\left(g_{i}^{p_{i}}, y_{i}\right)$ with the emulated distribution $Y_{i}=\left(g_{i}^{p_{i}^{\prime}}, y_{i}^{\prime}\right)$. With $\epsilon_{1}=\frac{1}{|Z|}-\frac{\zeta}{2 D+1}$ and $\epsilon_{2}=0$, the discrepancy between these joint distributions is at most:
\begin{align*}
\frac{1}{n 2^{\epsilon_{a}+1}}+\frac{\zeta}{2 D+1}+\epsilon_{1} \zeta = \frac{1}{n 2^{\epsilon_{a}+1}}+\frac{\zeta^{2}}{2 D+3-3 \zeta}+\frac{\zeta(1-\zeta)}{2 D+1}
\end{align*}

Furthermore, as each $X_{i}$ is independent of $X_{j}$ for $i \neq j$, I bound the discrepancy between the joint distributions $\left(g_{1}^{p_{1}}, \ldots, g_{n}^{p_{n}}, y_{1}, \ldots, y_{n}\right)$ and $\left(g_{1}^{p_{1}^{\prime}}, \ldots, g_{n}^{p_{n}^{\prime}}, y_{1}^{\prime}, \ldots, y_{n}^{\prime}\right)$ by summing the individual discrepancies between each $X_{i}$ for $x \in\left[0, \tilde{z} \cdot 2^{80}\right]$ and $Y_{i}$, as follows:
\begin{align*}
\frac{1}{2^{\epsilon_{a}+1}}+\frac{n \zeta^{2}}{2 D+3-3 \zeta}+\frac{n \zeta(1-\zeta)}{2 D+1} &= \frac{1}{2^{\epsilon_{a}+1}} +\frac{n \zeta\left(2 D+3-5 \zeta-3 \zeta^{2}\right)}{(2 D+3-3 \zeta)(2 D+1)} \\
& < \frac{1}{2^{\epsilon_{a}}+1}+\frac{n \zeta}{2 D+1}<\frac{1}{2^{\epsilon_{a}}},
\end{align*}

where the last equality holds if $D>n 2^{\epsilon_{a}+\delta}-1$. Finally, this also bounds the discrepancy between $(P, \boldsymbol{y})$ and $\left(P^{\prime}, \boldsymbol{y}^{\prime}\right)$, where $P=\prod_{i} g_{i}^{p_{i}}$ and $P^{\prime}=\prod_{i} g_{i}^{p_{i}^{\prime}}$. Combining the two conditions for $D$, I obtain $D \geq n 2^{\delta+\epsilon_{a}+1}|G|$.

Finally, I consider the $t_{i}$'s generated by the simulator. By its construction, I see that it is generated by the same polynomials that the prover and verifier use; therefore the $t_{i}$'s in the simulator transcript will be identical to the real transcript. 

\end{proof}

\section{Implementation}\label{7}

I first apply the Fiat–Shamir heuristic to the PIOP protocol to transform the transparent interactive arguments of knowledge into transparent and non-interactive arguments of knowledge (i.e. transparent zk-SNARKs). Then, I implemented LUMEN by coding $8000$ lines of Rust and Python code [27]. I ran my implementation under different conditions such as different Ethereum transaction size to ensure that my efficiency measurements (proof size, prove computation time, and verification time) are accurate and representative of LUMEN. 

\subsection{Results}

I chose four state-of-the-art methods to compare LUMEN to: two transparent SNARKs (zk-STARK and DARK) and two non-transparent SNARKs (Plonk and Sonic). We first compare the proof size of LUMEN with that of the other algorithms. The transparent SNARKs are in green while the non-transparent SNARKs are in blue. 

\begin{figure}[H]
    \centering
    \includegraphics[width=0.7\textwidth]{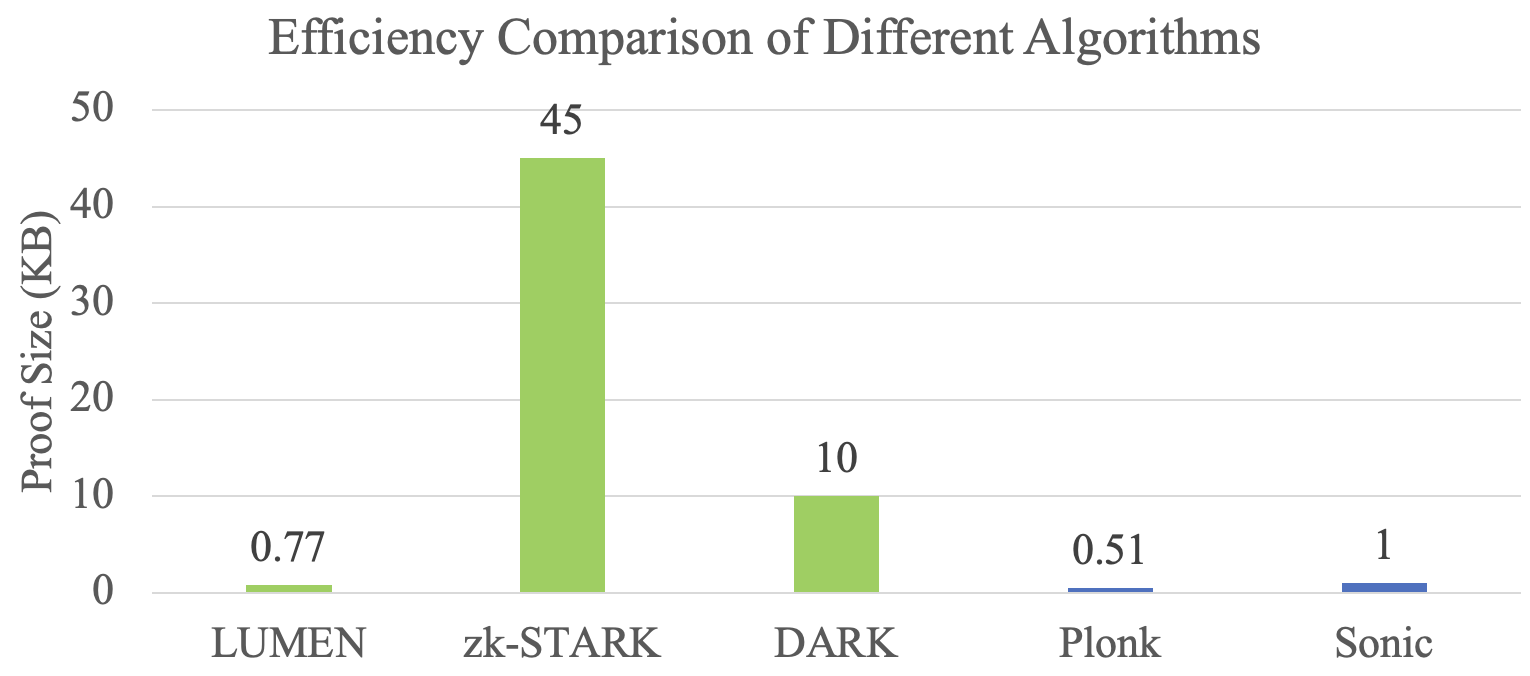}
    \caption{Proof Size Comparison of Different Algorithms \\ \textbf{Source}: Myself}
\end{figure}

As seen in the table on the top of the next page, LUMEN is around $58$ times more efficient than zk-STARK and $13$ times more efficient than DARK. Moreover, LUMEN is on par with the efficiency of the non-transparent zk-SNARKs: at around $1$KB. Thus, I demonstrated that LUMEN's transparency does not significantly affect its efficiency and that LUMEN is a promising solution to enhance ZKR's security with its transparent zk-SNARKs. 

When comparing prover time and verifier time, we consider the parameter $n$, which is the number of gates. In our comparison, we let the number of gates be constant and $2^{20}$ for a security level of $120$-bit. Thus, we get the following comparisons. 

\begin{figure}[H]
    \centering
    \includegraphics[width=0.7\textwidth]{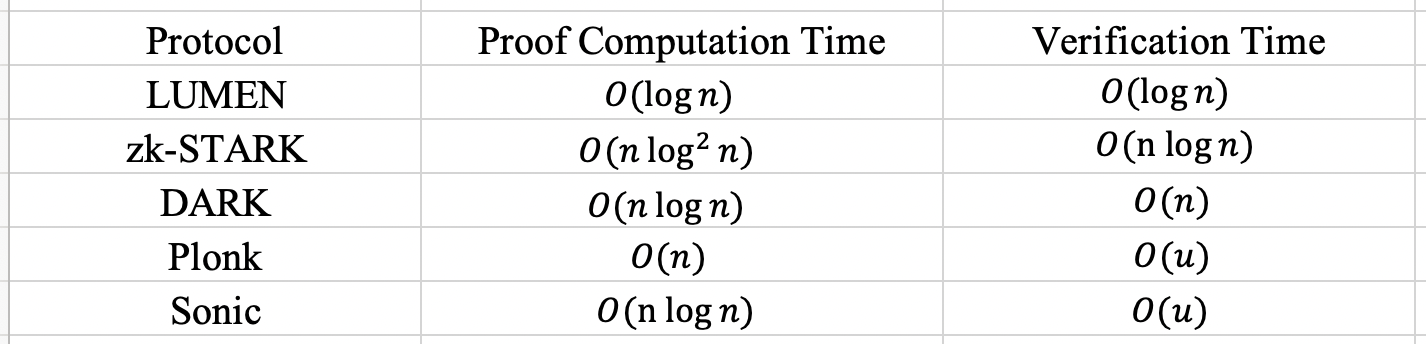}
    \captionsetup{justification=centering}
    \caption{Proof Computation Time and Verification Time Comparisons \\ \textbf{Source}: Myself}
\end{figure}

As seen on the table, it is clear that LUMEN clearly surpasses the transparent zk-SNARKs with its small proof computation and verification time, and those times are comparable to those of Plonk and Sonic, which are non-transparent zk-SNARKs. As demonstrated above, LUMEN's efficiency well exceeds the efficiency of transparent zk-SNARKs and is similar to the efficiency of non-transparent zk-SNARKs. 

\section{Conclusion}\label{8}

This research proposes LUMEN, a novel set of algorithms that generate transparent zk-SNARKs that enhance Ethereum's security without sacrificing its efficiency. LUMEN is a promising solution that could be implemented into ZKR and Ethereum, benefiting the market worth over 220 billion USD. This security improvement will aid Ethereum's path to more scalability and has the potential to impact the security standards of the entire cryptocurrency market. 

LUMEN's proposed zero-knowledge system has several advantages, including its transparency, efficiency, practicality, and easy implementation because of its PCS's witness-extended emulation. On the other hand, there are also a few areas of future work to be done. Researchers could further modify LUMEN's algorithms to improve Ethereum's security or efficiency, generalize LUMEN's algorithms to other cryptocurrencies, and utilize the innovative techniques within LUMEN in other algorithms or areas of cryptography to enhance security or efficiency. 

\newpage

\end{document}